\documentclass[10pt, conference]{IEEEtran}
\usepackage{times}
\usepackage{cite}
\usepackage{xcolor}
\usepackage[cmex10]{amsmath}
\usepackage{mathtools}
\usepackage{amssymb}
\usepackage{epsfig,verbatim}
\usepackage{algorithm}
\usepackage{algpseudocode}

\providecommand{\Ac}{{\cal A}}
\providecommand{\Cc}{{\cal C}}
\providecommand{\Dc}{{\cal D}}

\providecommand{\Ic}{{\cal I}}
\providecommand{\Lc}{{\cal L}}

\setlength{\arraycolsep}{1pt}

\newtheorem{MyTheorem}{Theorem}

%%%%%%%%%%%%%%%%%%%%%%%%%%%%%%%%%%%%%%%%%%%%%%%%%%%%%%%%%%%%%%%%%%%%
\begin{document}
\title{Capacity-Achieving Rate-Compatible Polar Codes for General Channels}

\author{\IEEEauthorblockN{Marco~Mondelli}
\IEEEauthorblockA{EPFL, Switzerland\\
marco.mondelli@epfl.ch}
\and
\IEEEauthorblockN{S.~Hamed~Hassani}
\IEEEauthorblockA{ETH Z\"{u}rich, Switzerland\\
hamed@inf.ethz.ch}
\and
\IEEEauthorblockN{Ivana Mari\'c, Dennis Hui}
\IEEEauthorblockA{Ericsson Research, Santa Clara, USA\\
\{ivana.maric, dennis.hui\}@ericsson.com}
\and
\IEEEauthorblockN{Song-Nam Hong}
\IEEEauthorblockA{Ajou University, Korea\\
snhong@ajou.ac.kr}
}

%\author{Marco~Mondelli, S.~Hamed~Hassani, Ivana Mari\'c, Dennis Hui, Song-Nam Hong
%\thanks{M.~Mondelli is with the School of Computer and Communication Sciences, EPFL, Switzerland (email: marco.mondelli@epfl.ch)}
%\thanks{S.~H.~Hassani is with the Computer Science Department, ETH Z\"{u}rich, Switzerland (email: hamed@inf.ethz.ch)}
%\thanks{I.~Maric and D. Hui are with Ericsson Research, San Jose, CA, USA (emails: \{ivana.maric, dennis.hui\}@ericsson.com)}
%\thanks{S.-N. Hong is with Ajou University, Korea (email: )}
%}

\maketitle

%%%%%%%%%%%%%%%%%%
\begin{abstract}
We present a rate-compatible polar coding scheme that achieves the capacity of any family of channels. Our solution generalizes the previous results \cite{HongHuiMaric2015, LiTse2015} that provide capacity-achieving rate-compatible polar codes for a \emph{degraded} family of channels. The motivation for our extension comes from the fact that in many practical scenarios, e.g., MIMO systems and non-Gaussian interference, the channels cannot be ordered by degradation. The main technical contribution of this paper consists in removing the degradation condition. To do so, we exploit the ideas coming from the construction of universal polar codes. 

Our scheme possesses the usual attractive features of polar codes: low complexity code construction, encoding, and decoding; super-polynomial scaling of the error probability with the block length; and absence of error floors. On the negative side, the scaling of the gap to capacity with the block length is slower than in standard polar codes, and we prove an upper bound on the scaling exponent. 
\end{abstract}

\begin{keywords}
%Polar codes, rate-compatibility, capacity-achieving codes.
Polar codes, channel capacity, capacity-achieving codes, rate-compatibility, retransmissions, HARQ-IR, universality.
\end{keywords}
%%%%%%%%%%%%%%%%%%%%%%%%
\section{Introduction}
%%%%%%%%%%%%%%%%%%%%%%%%
% 1. Introduction
%%%%%%%%%%%%%%%%%%%%%%%
Polar codes, introduced by Ar{\i}kan \cite{Ari09}, achieve the capacity of any binary memoryless symmetric (BMS) channel with encoding and decoding complexity $\Theta(n \log_2 n)$, where $n$ is the block length of the code. The code construction can be performed with complexity $\Theta(n)$ \cite{TV13con, RHTT}. Furthermore, a unified characterization of the performance of polar codes in several regimes is provided in \cite{MHU15unif-ieeeit}. Here, let us just recall the following basic facts: the error probability scales with the block length roughly as $2^{-\sqrt{n}}$ \cite{ArT09}; the gap to capacity scales with the block length as $n^{-1/\mu}$, and bounds on the scaling exponent $\mu$ are provided in \cite{HAU14, GB14, MHU15unif-ieeeit}; and polar codes are not affected by error floors \cite{MHU15unif-ieeeit}. A successive cancellation list (SCL) decoder with space complexity $O(L n)$ and time complexity $O(L n \log_2 n)$ is proposed in \cite{TVa15}, where $L$ is the size of the list. Empirically, the use of several concurrent decoding paths yields an error probability comparable to that under optimal MAP decoding with practical values of the list size. In addition, by adding only a few extra bits of cyclic redundancy check (CRC) precoding, the resulting performance is comparable with state-of-the-art LDPC codes. Because of their attractive features, polar codes are being considered for use in future wireless communication systems (e.g., 5G).
% (e.g. 5G cellular systems).

%%%%%%%%%%%%%%%%%
% 2. HARQ-IR
%%%%%%%%%%%%%%%%%
Wireless communication systems require adaptive transmission techniques due to the time-varying nature of the channel. A popular solution is based on hybrid automatic repeat request with incremental redundancy (HARQ-IR). The idea is to send parity bits in an incremental fashion depending on the quality of the time-varying channel. Such systems require the use of \emph{rate-compatible} codes typically obtained by puncturing. In a rate-compatible family of codes the set of parity bits of a code with higher rate is a subset of the set of parity bits of a code with lower rate. In this way, if decoding is not successful at a particular rate, then the receiver can request only the additional parity bits from the transmitter, instead of the full set of parity bits of the code with lower rate.

% The construction of rate-compatible turbo codes and LDPC codes has been the subject of extensive research (see \cite{Rowitch2000, Tsung-Yi2015} and references therein). 
%%%%%%%%%%%%%%%%%
% 3. Motivation
%%%%%%%%%%%%%%%%%

%Although polar codes can achieve the capacity of any symmetric binary-input channel, designing good rate-compatible constructions is more challenging. Puncturing of polar codes yields a rate loss, which means that the resulting scheme is not capacity-achieving.

%As in \cite{Eslami2011, Shin2013, WangLiu2014, Miloslavskaya2015} an information set is optimized according to a puncturing pattern, these methods cannot be used to design a family of rate-compatible punctured codes as required for HARQ-IR, where the same information set (generally optimized for the mother code) should be used for all punctured codes in the family. In \cite{Huawei2014}, a heuristic search algorithm was presented to design a good puncturing pattern for a fixed information set.

% However, finding an optimal rate-compatible puncturing pattern with low complexity is still an open problem. \textcolor{red}{Here, you refer to the fact that none of these methods is capacity-achieving?}

%%%%%%%%%%%%%%%%%%%%%%%%%%%%
% 3. Our approach
%%%%%%%%%%%%%%%%%%%%%%%%%%%% 
The study of puncturing patterns for polar codes is considered in \cite{Eslami2011, Shin2013, Huawei2014, WangLiu2014}, and an efficient algorithm for optimizing jointly the puncturing pattern and the set of information bits of the code is proposed in \cite{Miloslavskaya2015}. When the channels are degraded, the authors of \cite{HongHuiMaric2015, LiTse2015} propose rate-compatible polar-like codes that are provably capacity-achieving. These solutions take advantage of the nested property of polar codes for degraded channels \cite[Lemma 4.7]{Kor09thesis}. Hence, they can be used only when the family of channels over which the transmission takes place is ordered by degradation, otherwise the nested property does not hold. Let us mention two notable practical scenarios in which the degradation condition is not satisfied: (i) When the transmitter and/or receiver have multiple antennas (MIMO systems), the family of transmission channels is not degraded. (ii) When there is non-Gaussian interference at the receiver (e.g., another user is transmitting a modulated signal), the family of channels is again not degraded. 
In such communications scenarios, the schemes proposed in \cite{HongHuiMaric2015, LiTse2015} do not achieve capacity.

In this paper, we present rate-compatible and provably capacity-achieving polar codes \emph{for any family of BMS channels}. In order to remove the degradation assumption, we take advantage of the ideas coming from the construction of universal polar codes \cite{HRunipol, SaWa16}. These techniques have proven useful to align the polarized indices for the wiretap channel in \cite{SaV13}, for the broadcast channel in \cite{MHSU14broad-ieeeit} and, later on, for many other non-standard communication scenarios. We also show that our solution retains the usual attractive properties of polar codes: low complexity for code construction, encoding, and decoding; error probability that scales with the block length roughly as $2^{-\sqrt{n}}$; and absence of error floor. However, the scaling of the gap to capacity with the block length is slower than that of standard polar codes, and we provide an upper bound on the scaling exponent of rate-compatible polar codes. 

The remainder of this paper is organized as follows. In Section \ref{sec:prel}, we present some basic facts about rate-compatible codes and about polar codes. Furthermore, we review the scheme of \cite{HongHuiMaric2015} to construct capacity-achieving rate-compatible polar codes for a degraded family of channels. In Section \ref{sec:mainres}, we describe the proposed rate-compatible family of polar codes and  analyze its properties. In Section \ref{sec:concl}, we provide  concluding remarks.

% Due to the space limitation, the proofs are omitted. Detailed proofs for all theorems can be found in \cite{HongHuiMaric2015}.

%%%%%%%%%%%%%%%%%%%%%%%%%%%%
\section{Preliminaries} \label{sec:prel}
%%%%%%%%%%%%%%%%%%%%%%%%%%%%

%%%%%%%%%%%%%%%%%%%%%%%%%%%%
\subsection{Rate-Compatible Codes} \label{subsec:ratecomp}
%%%%%%%%%%%%%%%%%%%%%%%%%%%%

To simplify notation, let $[K] \triangleq \{1,2,\ldots,K\}$ for any positive integer $K$ and, for $i<j$, let $a^{i:j}$ be a shortcut for the vector $(a^{(i)}, \ldots, a^{(j)})$. Given a fixed number of information bits $k$, consider a family of codes $\{\Cc_{1}, \Cc_{2}, \ldots, \Cc_{K}\}$ with block lengths $\bar{n}_1 < \bar{n}_2 < \ldots <\bar{n}_K$ and rates $R_1 > R_2 > \ldots > R_K$ such that $R_i = k/\bar{n}_i$ for $i\in [K]$. We say that $\{\Cc_{1}, \Cc_{2}, \ldots, \Cc_{K}\}$ is {\it rate-compatible} if the codewords of the code $\Cc_{i}$ can be obtained by removing $\bar{n}_j-\bar{n}_i$ bits from the codewords of the code $\Cc_{j}$ for any $j>i$. 

Let us now explain how to use a family of rate-compatible codes $\{\Cc_{1}, \Cc_{2}, \ldots, \Cc_{K}\}$ for HARQ-IR. Given the information vector $u^{1:k}\in\{0,1\}^k$, we first transmit a codeword $x_1^{1:\bar{n}_1}$ of $\Cc_{1}$. At the receiver, if the decoding is successful, the procedure ends; otherwise, an error message (NACK) is sent back to the transmitter. As the family of codes is rate-compatible, the codeword $x_2^{1:\bar{n}_2}$ of $\Cc_{2}$ can be obtained by adding $\bar{n}_2-\bar{n}_1$ extra bits to $x_1^{1:\bar{n}_1}$. As $x_1^{1:\bar{n}_1}$ has been already sent, in order to transmit $x_2^{1:\bar{n}_2}$, we need to send only these $\bar{n}_2-\bar{n}_1$ extra bits over the channel. Again, at the receiver, if the decoding is successful, the procedure ends; otherwise, an error message (NACK) is sent back to the transmitter. In general, a codeword $x_{i+1}^{1:\bar{n}_{i+1}}$ of $\Cc_{i+1}$ can be obtained by adding $\bar{n}_{i+1}-\bar{n}_i$ bits to a codeword $x_i^{1:\bar{n}_i}$ of $\Cc_{i}$. Hence, if $x_i^{1:\bar{n}_i}$ is not recovered correctly, in order to transmit $x_{i+1}^{1:\bar{n}_{i+1}}$, we need to send only these $\bar{n}_{i+1}-\bar{n}_i$ extra bits over the channel.

Set $\bar{n}_0 = 0$ and $n_i = \bar{n}_i-\bar{n}_{i-1}$, for $i\in [K]$. Then, we have that $\bar{n}_i = \sum_{j=1}^i n_j$, and we will refer to $\{n_1, n_2, \ldots, n_K\}$ as the set of incremental block lengths of the family of codes $\{\Cc_{1}, \Cc_{2}, \ldots, \Cc_{K}\}$. Note that the condition on the rates can be rewritten as 
\begin{equation}\label{eq:conditionrate}
R_i = \frac{k}{\sum_{j=1}^i n_j}, \qquad \forall \hspace{0.2em}i \in [K].
\end{equation}
Let $W_1, W_2, \ldots, W_K$ denote a family of $K$ BMS channels with capacities $I(W_1) > I(W_2) > \ldots > I(W_K)$. We say that a sequence of rate-compatible families of codes $\{\Cc_{1, m}, \Cc_{2, m}, \ldots, \Cc_{K, m}\}_{m\in \mathbb{N}}$, designed for a monotonically increasing sequence of information sizes $\{k_m\}_{m\in \mathbb{N}}$, achieves the capacity of $W_1, W_2, \ldots, W_K$ if, for any $i\in [K]$, the sequence of codes $\{\Cc_{i, m}\}_{m\in \mathbb{N}}$ achieves the capacity of $W_i$. In other words, we require that, for any $i\in [K]$, the sequence of codes  $\{\Cc_{i, m}\}_{m\in \mathbb{N}}$ has block lengths $\bar{n}_{i, m}\to \infty$, rates $R_{i, m}$ converging to a value arbitrarily close to $I(W_i)$, and vanishing error probability when the transmission takes place over $W_i$. To avoid cluttering, for the rest of this paper we will drop the index $m$ when considering a sequence of rate-compatible families of codes.

%%%%%%%%%%%%%%%%%%%%%%%%%%%%
\subsection{Polar Codes for Transmission over a BMS Channel} \label{Polarcodes}
%%%%%%%%%%%%%%%%%%%%%%%%%%%%

Consider transmission over a BMS channel $W$ with capacity $I(W)$, and let $X$ and $Y$ denote the input and the output of the channel, respectively. In the following, we briefly revise how to transmit over $W$ with a rate arbitrarily close to $I(W)$ by using polar codes.
%First, we give a high-level description of the scheme. Then, we provide a performance analysis for the error exponent, scaling exponent, and error floor regimes. 

\noindent {\bf Design of the Scheme.} Let $n\in \mathbb N$ be a power of $2$ and consider the $n\times n$ matrix $G_n$ defined as
\begin{equation}\label{eq:defFtimesn}
G_n = B_n F^{\otimes \log_2 n}, \qquad \qquad F =  \biggl[ \begin{array}{cccc}
1 & & & 0 \\
1 & & & 1 \end{array} \biggr], 
\end{equation}
where $F^{\otimes \log_2 n}$ denotes the $\log_2 n$-th Kronecker power of $F$, and $B_n$ is the permutation matrix that acts as a bit-reversal operator (see Section VII-B of \cite{Ari09} for further details).

Let $X^{1:n}$ be a vector with $n$ i.i.d. uniformly random components and define $U^{1:n} = X^{1:n} G_n$. Consider the set 
\begin{equation} \label{eq:lhsets}
\begin{split}
\Lc_{X \mid Y} &= \{i \in [n] \colon Z(U^{(i)} \mid  U^{1:i-1}, Y^{1:n}) \le \delta_n\},
\end{split}
\end{equation}
where $Z(\cdot | \cdot)$ denotes the Bhattacharyya parameter that is defined as follows. Given $(T, V)\sim p_{T, V}$, with $T$ binary and $V$ taking values in
an arbitrary discrete alphabet ${\mathcal V}$, we set
\begin{equation} \label{eq:BattaTV}
Z(T\mid V) = 2 \sum_{v\in{\mathcal V}}
{\mathbb P}_V(v)\sqrt{{\mathbb P}_{T\mid V}(0\mid v){\mathbb P}_{T\mid V}(1\mid v)}.
\end{equation}
The Bhattacharyya parameter $Z(T\mid V)$ is close to $0$ (or $1$) if and only if the conditional entropy $H(T\mid V)$ is 
close to $0$ (or $1$). Consequently, if $Z(T\mid V)$ is close to $0$, then $T$ is approximately a deterministic function of $V$ and, 
if $Z(T\mid V)$ is close to $1$, then $T$ is approximately uniformly
distributed and independent of $V$. Hence, for $i\in \Lc_{X \mid Y}$, the bit $U^{(i)}$ is approximately a deterministic function of the previous bits $U^{1:i-1}$ and the channel output $Y^{1:n}$. This means that $U^{1:n}$ can be decoded in a successive fashion, given the channel output $Y^{1:n}$ and the values $\{U^{(i)}\}_{i\in \Lc_{X \mid Y}^{\rm c}}$, where $\Lc_{X \mid Y}^{\rm c}$ denotes the complement of $\Lc_{X \mid Y}$. Furthermore,
\begin{equation} \label{eq:cardHL}
\begin{split}
\lim_{n\to \infty}\frac{1}{n} \, | \Lc_{X \mid Y}|  &= I(W).\\
\end{split}
\end{equation}

In order to construct a polar code for the channel $W$, we proceed as follows. The information bits are placed in the positions of $U^{1:n}$ indexed by $\Lc_{X \mid Y}$, as these positions will be decodable in a successive fashion given the output. The remaining positions of $U^{1:n}$ are frozen and their values are shared between the encoder and the decoder. Any choice of the frozen bits is as good as any other (see Section VI-B of \cite{Ari09}). Hence, for the sake of simplicity, we can simply set these bits to $0$. Finally, we transmit over the channel the vector $X^{1:n} = U^{1:n} G_n^{(-1)} = U^{1:n} G_n$, where the last equality follows from the fact that $G_n = G_n^{(-1)}$. 

%\noindent {\bf Encoding.} We place the information into the positions indexed by $\Lc_{X \mid Y}$, hence let $\{u^{(i)}\}_{i \in \Lc_{X \mid Y}}$ denote the information bits to be transmitted. The remaining positions are filled with all $0$s. Note that, by \eqref{eq:cardHL}, the requirement on the transmission rate is met. As $G_n = G_n^{-1}$, the vector $x^{1:n} = u^{1:n} G_n$ is transmitted over the channel. As discussed in Section VII of \cite{Ari09}, by exploiting the particular structure of the matrix $G_n$, it is possible to perform this matrix multiplication with complexity $\Theta(n \log_2 n)$.

%\noindent {\bf Decoding.} The decoder receives $y^{1:n}$ and computes the estimate $\hat{u}^{1:n}$ of $u^{1:n}$ according to the rule
%\begin{equation}\label{eq:decrulecc}
%\hat{u}^{(i)} \hspace*{-0.3em}=\hspace*{-0.3em} \left\{ \begin{array}{ll}
%\hspace*{-0.3em}\displaystyle\arg\max_{u\in \{0, 1\}} \hspace*{-0.3em}{\mathbb P}_{U^{(i)}
%| U^{1:i-1}\hspace*{-0.1em}, Y^{1:n}}(u | u^{1:i-1}\hspace*{-0.4em}, y^{1:n}), & 
%\mbox{if } i \hspace*{-0.1em}\in\hspace*{-0.1em} \Lc_{X \mid Y},\\
%u^{(i)}, & \mbox{ otherwise,} \\
% \end{array}\right.
%\end{equation}
%where the probabilities ${\mathbb P}_{U^{(i)} \mid U^{1:i-1}, Y^{1:n}}(u \mid u^{1:i-1}, y^{1:n})$ can be computed recursively with complexity $\Theta(n \log_2 n)$ (see Section VIII of \cite{Ari09}). This is a successive cancellation decoder, as the estimates are produced one by one with a single pass on the data (as opposed to iterative decoding).

\noindent {\bf Complexity and Performance Analysis.} Let us first discuss the \emph{code construction}. The problem consists in finding the set $\Lc_{X \mid Y}$, which is equivalent to computing $Z(U^{(i)} \mid  U^{1:i-1}, Y^{1:n})$ for $i\in [n]$. The approximate computation of these Bhattacharyya parameters (hence, the code construction) can be performed in $\Theta(n)$ by using the techniques described in~\cite{TV13con, RHTT}.

Let us now consider the \emph{encoding and decoding complexity}. As discussed in Section VII of \cite{Ari09}, by exploiting the particular structure of the matrix $G_n$, it is possible to perform the matrix multiplication $U^{1:n} G_n$ with complexity $\Theta(n \log_2 n)$. Similarly, the decoding complexity is $\Theta(n \log_2 n)$ (see Section VIII of \cite{Ari09} for further details).

Finally, let us discuss the \emph{error performance}. The block error probability $P_{\rm B}$ can be upper bounded by the sum of the Bhattacharyya parameters of the channels that are not frozen (see Proposition 2 of \cite{Ari09}). In formulae, 
\begin{equation}\label{eq:ubPe}
P_{\rm B} \le \sum_{i\in {\Lc_{X \mid Y}}} Z(U^{(i)} \mid U^{1:i-1}, Y^{1:n})\le n \delta_n.
\end{equation}
In Ar{\i}kan's original paper~\cite{Ari09}, $\delta_n$ is upper bounded by $n^{-5/4}$, hence $P_{\rm B}$ is $O(n^{-1/4})$. This bound is refined in \cite{ArT09}, where it is shown that $P_{\rm B}$ is $O(2^{-n^{\beta}})$ for any $\beta \in (0, 1/2)$. This means that the error probability scales with the block length roughly as $2^{-\sqrt{n}}$, which gives a characterization of the \emph{error exponent} regime. Furthermore, the gap to capacity scales with the block length as $n^{-1/\mu}$, where the \emph{scaling exponent} $\mu$ depends on the transmission channel. Bounds on $\mu$ are provided first in \cite{HAU14}, then in \cite{GB14} and finally in \cite{MHU15unif-ieeeit}. In particular, in this last work, it is proved that $\mu \le 4.714$ for any BMS channel and that $\mu\le 3.639$ for the special case of the binary erasure channel (BEC), which approaches the value $3.627$ computed heuristically for the BEC. In \cite{MHU15unif-ieeeit}, it is also shown that the error probability scales with the Bhattacharyya parameter $Z(W)$ of the channel roughly as $Z(W)^{\sqrt{n}}$, which means that polar codes are \emph{not affected by error floors}.

%%%%%%%%%%%%%%%%%%%%%%%%%%%%
\subsection{Capacity-Achieving Rate-Compatible Polar Codes for Degraded Channels} \label{ratecompdegr}
%%%%%%%%%%%%%%%%%%%%%%%%%%%%

Let $W_1, W_2, \ldots, W_K$ be a family of BMS channels with respective capacities $I(W_1), I(W_2), \ldots, I(W_K)$ ordered by degradation, i.e., $W_1 \succ W_2 \succ \ldots \succ W_K$. Consider the transmission of $k$ bits of information via the rate-compatible family of polar codes $\{\Cc_{1}, \Cc_{2}, \ldots, \Cc_{K}\}$ with incremental block lengths $\{n_1, n_2, \ldots, n_K\}$ and rates $\{R_1, R_2, \ldots, R_K\}$ such that $R_{\ell} < I(W_{\ell})$, for $\ell \in [K]$. This means that each of the codewords of $\Cc_{\ell}$ can be decomposed into the codewords of $\ell$ polar codes with block lengths $\{n_1, n_2, \ldots, n_{\ell}\}$. In the following, we revise the scheme of \cite{HongHuiMaric2015} to construct capacity-achieving rate-compatible polar codes in this degraded scenario. This scheme is based on a chaining construction, in which some positions of a block are repeated in the following block.

\noindent {\bf Design of the Scheme.}
Let $X_{\ell}^{1:n_{\ell}}$ be a vector with $n_{\ell}$ i.i.d. uniformly random components and define $U_{\ell}^{1:n_{\ell}} = X_{\ell}^{1:n_{\ell}} G_{n_\ell}$. For the moment, we assume that $n_{\ell}$ is a power of $2$. When discussing the complexity of the code construction, we will mention how to tackle the case in which $n_{\ell}$ is not a power of $2$. As in \eqref{eq:lhsets}, consider the set 
\begin{equation} \label{eq:lhsetsrc}
\begin{split}
\Lc_{X_{\ell} \mid Y_{\ell, j}} &= \{i \in [n_{\ell}] \colon Z(U_{\ell}^{(i)} \mid  U_{\ell}^{1:i-1}, Y_{\ell, j}^{1:n_{\ell}}) \le \delta_{n_{\ell}}\},
\end{split}
\end{equation}
where $Y_{\ell, j}^{1:n_{\ell}}$ denotes the output of $W_j$ when $X_{\ell}^{1:n_{\ell}}$ is transmitted. Hence, for $i\in \Lc_{X_{\ell} \mid Y_{\ell, j}}$, the bit $U_{\ell}^{(i)}$ is approximately a deterministic function of the previous bits $U_{\ell}^{1:i-1}$ and the channel output $Y_{\ell, j}^{1:n_{\ell}}$. This means that $U_{\ell}^{1:n_{\ell}}$ can be decoded in a successive fashion, given channel output $Y_{\ell, j}^{1:n_{\ell}}$ and the values $\{U_{\ell}^{(i)}\}_{i\in \Lc_{X_{\ell} \mid Y_{\ell, j}}^{\rm c}}$, where $\Lc_{X_{\ell} \mid Y_{\ell, j}}^{\rm c}$ denotes the complement of $\Lc_{X_{\ell} \mid Y_{\ell, j}}$. Let $\Ac_j^{(\ell)}$ be a subset of $\Lc_{X_{\ell} \mid Y_{\ell, j}}$ such that
\begin{equation}\label{eq:sizeA}
|\Ac_j^{(\ell)}| = n_{\ell} R_j.
\end{equation}
Note this subset exists because $R_{\ell} < I(W_{\ell})$ and, as in \eqref{eq:cardHL},
\begin{equation} \label{eq:cardHLrc}
\begin{split}
\lim_{n_{\ell}\to \infty}\frac{1}{n_{\ell}} \, | \Lc_{X_{\ell} \mid Y_{\ell, j}}|  &= I(W_j).\\
\end{split}
\end{equation}
As the family of channels is ordered by degradation, by Lemma 4.7 of \cite{Kor09thesis}, we have that, for any $\ell, j, j'\in [K]$ with $j \le j'$,
\begin{equation}\label{eq:Lcdegraded}
\Lc_{X_{\ell} \mid Y_{\ell, j}} \supseteq \Lc_{X_{\ell} \mid Y_{\ell, j'}}.
\end{equation}
Hence, we can choose the sets $\Ac_j^{(\ell)}$ so that, for $j \le j'$, 
\begin{equation}\label{eq:Acdegraded}
\Ac_j^{(\ell)} \supseteq \Ac_{j'}^{(\ell)}.
\end{equation}

Consider the first transmission, in which the vector $X_{1}^{1:n_{1}}$ is sent over the channel. Note that $|\Ac_1^{(1)}| = n_1 R_1 = k$, because of \eqref{eq:conditionrate} and \eqref{eq:sizeA}. We put the $k$ information bits into $\Ac_1^{(1)}$ and we set to $0$ the remaining positions of $U_{1}^{1:n_{1}}$.

Consider the $\ell$-th transmission, for $\ell \in \{2, \ldots, K\}$, in which the vector $X_{\ell}^{1:n_{\ell}}$ is sent over the channel. Define
\begin{equation}
\Ic^{(\ell)} = \bigcup_{j=1}^{\ell-1} \Ac_{\ell-1}^{(j)} \setminus \Ac_{\ell}^{(j)},
\end{equation}
and note that 
\begin{equation}\label{eq:cardIcj}
\begin{split}
|\Ic^{(\ell)}| &= \sum_{j=1}^{\ell-1} \left|\Ac_{\ell-1}^{(j)} \setminus \Ac_{\ell}^{(j)}\right|= \sum_{j=1}^{\ell-1} \left|\Ac_{\ell-1}^{(j)} \right| -\left| \Ac_{\ell}^{(j)}\right|\\
& \stackrel{\mathclap{\mbox{\footnotesize(a)}}}{=} \sum_{j=1}^{\ell-1} n_{j}R_{\ell-1} - \sum_{j=1}^{\ell-1}n_{j}R_{\ell} \\
&= n_{\ell}R_\ell + \sum_{j=1}^{\ell-1} n_{j}R_{\ell-1} -\sum_{j=1}^{\ell}n_{j}R_{\ell} \\
& \stackrel{\mathclap{\mbox{\footnotesize(b)}}}{=} n_{\ell}R_\ell \stackrel{\mathclap{\mbox{\footnotesize(c)}}}{=}  |\Ac_\ell^{(\ell)}|,
\end{split}
\end{equation}
where (a) uses \eqref{eq:sizeA}, (b) uses \eqref{eq:conditionrate}, and (c) uses again \eqref{eq:sizeA}. Because of \eqref{eq:cardIcj}, we repeat in $\Ac_{\ell}^{(\ell)}$ the values of the positions in $\Ic^{(\ell)}$ and we set to $0$ the remaining positions of $U_{\ell}^{1:n_{\ell}}$. The situation is schematically represented in Figure \ref{fig:degraded}. 
\begin{figure}[t] 
\centering 
\includegraphics[width=1.1\columnwidth, height = .8\columnwidth]{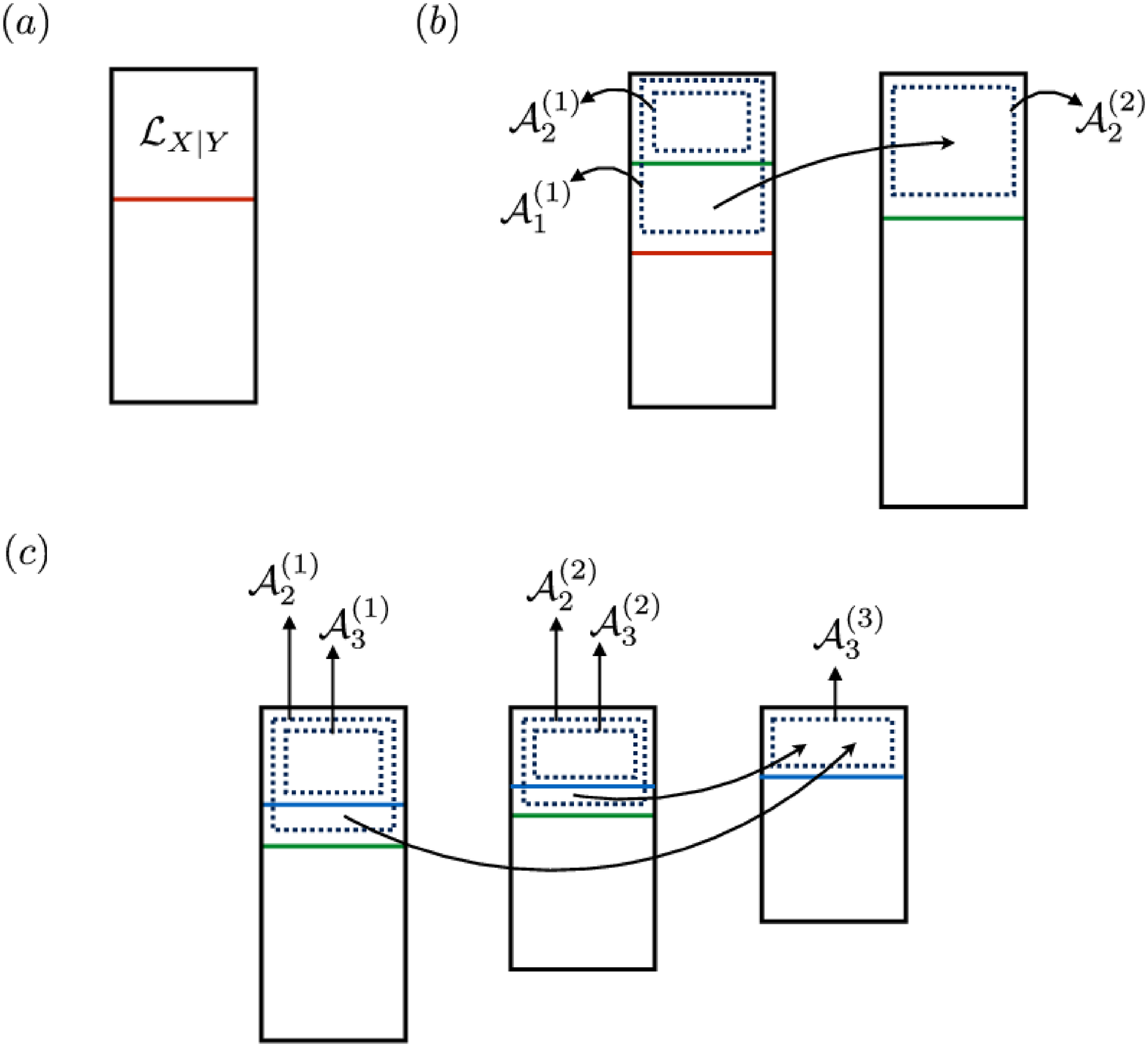}
\caption{(a) A simple graphical representation of the set $\mathcal{L}_{X \mid Y}$ defined in Section \ref{Polarcodes}.  The whole rectangle represents the index set $[n]$, and $\mathcal{L}_{X \mid Y}$ is the set of indices that are decodable in a successive fashion given the output. Note that $\mathcal{L}_{X \mid Y} $ is specified by the area above the red line. (b) Construction of rate-compatible polar codes for a family of degraded channels. Consider the $\ell$-th transmission for $\ell=2$ and note that the values of the indices in $A_{1}^{(1)} \setminus A_2^{(1)}$ are repeated in $\mathcal{A}_{2}^{(2)}$. In the smaller rectangle, the area above the red line represents $\mathcal{L}_{X_1 \mid Y_{1,1}}$ and the area above the green line represents $\mathcal{L}_{X_1 \mid Y_{1,2}}$. Also, in the larger rectangle the area above the green line represents $\mathcal{L}_{X_2 \mid Y_{2,2}}$.  (c) Consider the $\ell$-th transmission for $\ell=3$ and note that the values of the indices in $A_{2}^{(1)} \setminus A_3^{(1)}$ and $A_{2}^{(2)} \setminus A_3^{(2)}$ are repeated in $\mathcal{A}_{3}^{(3)}$.} 
\label{fig:degraded}
\end{figure} 

The decoding is performed ``backwards''. Assume that the transmission takes place over the channel $W_{\bar{\ell}}$ for some $\bar{\ell} \in [K]$. Let us first consider the case $\bar{\ell}=1$. The vector $X_{1}^{1:n_{1}}$ is transmitted over the channel and the receiver has access to $Y_{1, 1}^{1:n_{1}}$. Recall that $\Ac_1^{(1)}\subseteq \Lc_{X_{1} \mid Y_{1, 1}}$ and that the remaining positions of $U_{1}^{1:n_{1}}$ are frozen to $0$. Hence, the decoding of $U_{1}^{1:n_{1}}$ succeeds with high probability, and we reconstruct the $k$ information bits.

Consider now the case $\bar{\ell}>1$. The vectors $X_{1}^{1:n_{1}}, \ldots, X_{\bar{\ell}}^{1:n_{\bar{\ell}}}$ are transmitted over the channel and the receiver has access to $Y_{1, \bar{\ell}}^{1:n_{1}}, \ldots, Y_{\bar{\ell}, \bar{\ell}}^{1:n_{\bar{\ell}}}$. Recall that $\Ac_{\bar{\ell}}^{(\bar{\ell})}\subseteq \Lc_{X_{\bar{\ell}} \mid Y_{\bar{\ell}, \bar{\ell}}}$ and that the remaining positions of $U_{\bar{\ell}}^{1:n_{\bar{\ell}}}$ are frozen to $0$. Hence, the decoding of $U_{\bar{\ell}}^{1:n_{\bar{\ell}}}$ succeeds with high probability, and we reconstruct the values of the positions in $\Ic^{(\bar{\ell})}$. Recall that $\Ac_{\bar{\ell}}^{(\bar{\ell}-1)}\subseteq \Lc_{X_{\bar{\ell}-1} \mid Y_{\bar{\ell}-1, \bar{\ell}}}$, $\Ac_{\bar{\ell}-1}^{(\bar{\ell}-1)} \setminus \Ac_{\bar{\ell}}^{(\bar{\ell}-1)} \subseteq \Ic^{(\bar{\ell})}$, and that the remaining positions of $U_{\bar{\ell}-1}^{1:n_{\bar{\ell}-1}}$ are frozen to $0$. Hence, the decoding of $U_{\bar{\ell}-1}^{1:n_{\bar{\ell}-1}}$ succeeds with high probability, and we reconstruct the values of the positions in $\Ic^{(\bar{\ell}-1)}$. By repeating this procedure, we decode, in order, the vectors $U_{\bar{\ell}-2}^{1:n_{\bar{\ell}-2}}, \ldots, U_{1}^{1:n_{1}}$, and reconstruct the $k$ information bits.

The chaining construction described above is similar to the one developed for the wiretap channel in \cite{SaV13} and for the broadcast channel in \cite{MHSU14broad-ieeeit}. The analogy consists in the fact that the transmission is divided into several blocks and these blocks are chained together by repeating a suitable set of indices. The difference is that the current scheme suffers no rate loss, hence we do not require that the number of blocks goes large in order to achieve capacity. In the setting of this paper, the number of blocks depends on the channel over which the transmission takes place and  is upper bounded by $K$.

\noindent {\bf Complexity and Performance Analysis.} Consider the $\ell$-th transmission, for $\ell\in [K]$. Let us first discuss the \emph{code construction}. The problem consists in finding the sets $\{\Ac^{(\ell)}_{j}\}_{j\ge \ell}$. If the block length $n_{\ell}$ is a power of 2, nothing changes with respect to the standard transmission over a BMS channel and, for any $j\ge \ell$, the set $\Ac^{(\ell)}_{j}$ can be found in $\Theta(n_{\ell})$ by using the techniques in \cite{TV13con, RHTT}. Consequently, the complexity of the code construction is linear in the block length. However, given the rates $\{R_1, R_2, \ldots, R_K\}$, the block lengths need to satisfy \eqref{eq:conditionrate}. For this reason, $n_\ell$ might not be a power of 2. This issue is solved by Theorem 1 of \cite{HongHuiMaric2015}, where it is proved that, for any puncturing fraction and for any BMS channel, there exists a sequence of capacity-achieving punctured polar codes. The proof uses a random puncturing argument and models the punctured positions as outputs of the BEC. More concretely, let $m_\ell$ be the smallest power of 2 larger than $n_\ell$. The idea is to build a polar code of block length $m_\ell$ and puncture $m_\ell - n_\ell$ positions. The puncturing pattern is chosen uniformly at random, and we exploit the knowledge of the locations of the punctured bits in order to construct the polar code of block length $m_{\ell}$. In particular, the punctured bits are modelled as the output of a BEC with erasure probability $1$ and the remaining positions as the output of the channel $W_\ell$. Again, the code can be constructed with complexity linear in $m_{\ell}$, hence linear in $n_{\ell}$~\cite{TV13con, RHTT}. Note that we can repeat this construction for several puncturing patterns and pick the one that yields the best error performance (i.e., that minimizes the sum of the smallest $n_{\ell}R_{\ell}$ Bhattacharyya parameters of the synthetic channels).

The \emph{encoding and decoding complexity} remains of $\Theta(n_\ell \log_2 n_\ell)$, as in the standard polar  construction. Similarly, the \emph{error performance} is also the same as that of standard polar codes, which is discussed at the end of Section \ref{Polarcodes}.

%%%%%%%%%%%%%%%%%%%%%%%%%%%%
\section{Main Result} \label{sec:mainres}
%%%%%%%%%%%%%%%%%%%%%%%%%%%%

Let us state the main result of this paper. 

\begin{MyTheorem}\label{th:ratecompgen}
For any family of BMS channels $W_1, W_2, \ldots, W_K$ with capacities $I(W_1)>I(W_2)>\ldots>I(W_K)$, there exists a sequence of rate-compatible polar codes that is capacity-achieving.
\end{MyTheorem}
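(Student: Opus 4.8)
The plan is to isolate the single place where the degradation hypothesis enters the construction of Section~\ref{ratecompdegr} and replace it by an alignment step borrowed from universal polar coding. The degraded scheme is driven entirely by the nesting relation \eqref{eq:Acdegraded}, $\Ac_j^{(\ell)}\supseteq\Ac_{j'}^{(\ell)}$ for $j\le j'$, which is exactly what makes the repeated bits $\Ic^{(\ell)}$ of one block coincide with the frozen bits needed to decode the previous block in the backward pass. This nesting is itself a consequence of \eqref{eq:Lcdegraded}, the inclusion $\Lc_{X_\ell\mid Y_{\ell,j}}\supseteq\Lc_{X_\ell\mid Y_{\ell,j'}}$, which holds by Lemma~4.7 of \cite{Kor09thesis} only because $W_j\succ W_{j'}$. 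Every other ingredient---the count \eqref{eq:cardIcj}, the backward decoding, and the complexity and error analysis---uses only \eqref{eq:sizeA}, \eqref{eq:conditionrate} and the memberships $\Ac_j^{(\ell)}\subseteq\Lc_{X_\ell\mid Y_{\ell,j}}$, none of which needs degradation. It therefore suffices to produce, for each block $\ell$, sets playing the role of $\{\Ac_j^{(\ell)}\}_{j\ge\ell}$ that are simultaneously nested and (effectively) decodable over $W_\ell,\ldots,W_K$, even when the underlying sets $\Lc_{X_\ell\mid Y_{\ell,j}}$ are not nested.

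First I would fix a block $\ell$ and examine the good sets $\Lc_{X_\ell\mid Y_{\ell,\ell}},\ldots,\Lc_{X_\ell\mid Y_{\ell,K}}$ obtained by receiving the same transmitted vector $X_\ell^{1:n_\ell}$ over $W_\ell,\ldots,W_K$. For a pair $j<j'$ the only obstruction to choosing nested sets is the collection $\Lc_{X_\ell\mid Y_{\ell,j'}}\setminus\Lc_{X_\ell\mid Y_{\ell,j}}$ of coordinates that are good for the worse channel $W_{j'}$ but bad for the better channel $W_j$. Because $I(W_j)>I(W_{j'})$, the cardinality limit \eqref{eq:cardHLrc} gives $|\Lc_{X_\ell\mid Y_{\ell,j}}|\ge|\Lc_{X_\ell\mid Y_{\ell,j'}}|$ for $n_\ell$ large, hence $|\Lc_{X_\ell\mid Y_{\ell,j}}\setminus\Lc_{X_\ell\mid Y_{\ell,j'}}|\ge|\Lc_{X_\ell\mid Y_{\ell,j'}}\setminus\Lc_{X_\ell\mid Y_{\ell,j}}|$, so the obstruction coordinates can be matched injectively into coordinates that are decodable over $W_j$. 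The idea taken from \cite{HRunipol, SaWa16} is to chain several copies of the block and \emph{repeat} the bit assigned to each obstruction coordinate into its matched partner in the next copy: over $W_{j'}$ the coordinate is decoded directly, while over $W_j$ it is recovered from its repetition, which realizes an effective inclusion of the aligned index sets. Applying this to the consecutive pairs in $\{\ell,\ldots,K\}$ yields, by transitivity, an effective nested chain $\widetilde{\Ac}_\ell^{(\ell)}\supseteq\cdots\supseteq\widetilde{\Ac}_K^{(\ell)}$ with $|\widetilde{\Ac}_j^{(\ell)}|\approx n_\ell R_j$.

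With these aligned sets I would run the construction of Section~\ref{ratecompdegr} verbatim, substituting $\widetilde{\Ac}_j^{(\ell)}$ for $\Ac_j^{(\ell)}$: place the information in $\widetilde{\Ac}_\ell^{(\ell)}$, form $\Ic^{(\ell)}$ as in \eqref{eq:cardIcj}, repeat it in the next block, and decode backwards exactly as before. Since the decoder invokes only membership and nesting, the block error probability is again bounded by a sum of Bhattacharyya parameters as in \eqref{eq:ubPe}, which by a union bound over the (now more numerous) chained blocks still gives the $2^{-\sqrt{n}}$-type error exponent and rules out error floors. To see that capacity is still attained, I would let the number of chained copies grow with the block length so that the repetition overhead---the sole source of rate loss---vanishes; this growing overhead is also what degrades the scaling exponent relative to standard polar codes, matching the bound announced in the abstract. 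Non-power-of-two block lengths forced by \eqref{eq:conditionrate} are dealt with by the random puncturing argument of \cite{HongHuiMaric2015}, exactly as in the degraded case.

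The main obstacle is the alignment step of the second paragraph. Unlike the degraded case, the misaligned sets $\Lc_{X_\ell\mid Y_{\ell,j'}}\setminus\Lc_{X_\ell\mid Y_{\ell,j}}$ occupy a non-vanishing fraction of coordinates, so they cannot simply be frozen; they must be carried, and the chaining must do so for all $K$ channels \emph{simultaneously and in a nested fashion} rather than for a single pair. Making this multi-channel alignment consistent while keeping the repetition overhead asymptotically negligible---and then tracking how that overhead propagates into the scaling exponent---is the crux of the argument and the only place where work genuinely beyond \cite{HongHuiMaric2015} is required.
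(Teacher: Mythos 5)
Your first move---reducing the theorem to the nesting condition \eqref{eq:Acdegraded} (equivalently \eqref{eq:newcond}), and observing that everything else in the degraded scheme (the counting \eqref{eq:cardIcj}, the backward decoding, the complexity and error analysis) never uses degradation---is exactly the paper's reduction. The divergence, and the gap, is in how the nesting is enforced. You propose to realize it ``effectively'' by chaining $m$ copies of each block and repeating every obstruction bit (good for $W_{j'}$, bad for $W_j$) into a matched partner in the next copy, letting $m\to\infty$ so that the repetition overhead vanishes. For a single pair of channels this is the standard chaining construction of universal polar codes and it works. But for $K>2$ channels you only assert that applying it to consecutive pairs ``yields, by transitivity, an effective nested chain,'' and then your final paragraph concedes that making the multi-channel alignment consistent is the unresolved crux. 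That concession is accurate: transitivity of these ``effective'' inclusions is precisely what is not automatic. The repetition chains for different pairs interact (a coordinate can be an obstruction for the pair $(j,j+1)$ and at the same time a repetition target for the pair $(j+1,j+2)$); the copies must be decoded in opposite orders for the better and the worse channel of each pair, and these schedules have to be reconciled with one another and with the backward decoding across transmission blocks; and repeated coordinates carry no fresh information, so the sets you call $\widetilde{\Ac}_j^{(\ell)}$ are not index sets of a single code to which the counting identity \eqref{eq:cardIcj} applies verbatim. So the central technical step of the theorem is named but not carried out.

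The paper closes exactly this gap with a different mechanism: additional Ar{\i}kan polarization steps (Section V of \cite{HRunipol}) instead of repetition. Two i.i.d.\ copies of the block are combined; each coordinate in $\Dc=\Lc_{X_{\ell}\mid Y_{\ell,j+1}}\setminus\Lc_{X_{\ell}\mid Y_{\ell,j}}$ of the first copy is XORed with a matched coordinate in $\Dc'\subseteq\Lc_{X_{\ell}\mid Y_{\ell,j}}\setminus\Lc_{X_{\ell}\mid Y_{\ell,j+1}}$ of the second; the XOR branch becomes bad for both channels and the repeat branch good for both, so each step halves the misaligned fraction. Two properties make the multi-channel case go through cleanly: coordinates that are good for any channel remain good under further steps (the variable branch is only upgraded), so the alignment already achieved for the pair $(j,j+1)$ is not destroyed by the steps later performed for $(j+1,j+2),\ldots,(K-1,K)$; and after $t$ steps per pair one obtains \emph{genuine} set inclusions $\Ac^{(\ell)}_{\ell}\supseteq\cdots\supseteq\Ac^{(\ell)}_{K}$ on a single polar-like code, so the degraded-case argument of \cite{HongHuiMaric2015} applies word for word. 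Note also that the paper keeps $t$ fixed, chosen in advance from the target gap to capacity: the total rate loss is at most $(K-1)2^{-t}$ and the block-length inflation at most $2^{(K-1)t}$, which is all that the paper's definition of capacity-achieving requires, and which makes the error-exponent and scaling-exponent analysis (Theorem \ref{th:scalexp}) immediate. Your insistence on a vanishing overhead via $m\to\infty$ is not needed, and it is what forces you into the unresolved bookkeeping above.
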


\noindent {\bf Design of the Scheme.} If the family of channels $W_1, W_2, \ldots, W_K$ is ordered by degradation, we follow the scheme outlined in Section \ref{ratecompdegr}. Assume now that the degradation condition does not hold. Hence, condition \eqref{eq:Lcdegraded} might not hold, which also implies that  \eqref{eq:Acdegraded} might not hold. By analyzing carefully the argument of Section \ref{ratecompdegr}, we notice that, in order for the scheme to work, we need that \eqref{eq:Acdegraded} holds for any $\ell \in [K]$, for $j\ge \ell$ and for $j'\ge j$. Consequently, our strategy is to construct polar codes such that \eqref{eq:Acdegraded} holds for the desired values of $\ell$, $j$, and $j'$.  

Let us first consider the case $K=2$. The condition to be fulfilled is $\Ac_1^{(1)} \supseteq \Ac_{2}^{(1)}$. Define $\Lc_{X_{1} \mid Y_{1, 1}}$ and $\Lc_{X_{1} \mid Y_{1, 2}}$ as in \eqref{eq:lhsetsrc}. If $\Lc_{X_{1} \mid Y_{1, 1}} \supseteq \Lc_{X_{1} \mid Y_{1, 2}}$, then such a condition can be satisfied automatically and no further action is needed. However, in general, we have that $\Lc_{X_{1} \mid Y_{1, 1}} \not \supseteq \Lc_{X_{1} \mid Y_{1, 2}}$. To handle this case, we use the ideas developed in Section V of \cite{HRunipol}. Note that in \cite{HRunipol} a family of channels with the \emph{same} capacity is considered and the aim is to construct universal polar codes, i.e., codes that are reliable for every channel in the family. Here, the setting is different, as $I(W_1) > I(W_2)$. However, a similar strategy will prove successful. 

Define $\Dc = \Lc_{X_{1} \mid Y_{1, 2}} \setminus \Lc_{X_{1} \mid Y_{1, 1}}$ and let $\Dc'$ be a subset of $\Lc_{X_{1} \mid Y_{1, 1}} \setminus \Lc_{X_{1} \mid Y_{1, 2}}$ such that $|\Dc'| = |\Dc|$. By using \eqref{eq:cardHLrc} and that $I(W_1) > I(W_2)$, we have that, for $n_1$ sufficiently large, $|\Lc_{X_{1} \mid Y_{1, 1}}| \ge |\Lc_{X_{1} \mid Y_{1, 2}}|$. Hence, such a subset $\Dc'$ exists. Let us order the indices in $\Dc$ and $\Dc'$ so that $\Dc = \{d_1, \ldots, d_{|\Dc|}\}$ with $d_1 < \ldots < d_{|\Dc|}$ and $\Dc' = \{d_1', \ldots, d_{|\Dc|}'\}$ with $d_1' < \ldots < d_{|\Dc|}'$. Set $d_0 = d_0' = 0$.

The idea is to perform one further step of polarization by combining two i.i.d. copies of $U_{1}^{1:n_{1}}$ into a vector of length $2n_{1}$. In a nutshell, we polarize the positions indexed by $\Dc$ of the first copy with the positions indexed by $\Dc'$ of the second copy and we leave the remaining positions unchanged. More specifically, we proceed as follows first for $i=1$, then $i=2$, and so on until $i= |\Dc|$: we repeat the values of the first copy from index $d_{i-1}+1$ until index $d_i-1$; we repeat the values of the second copy from index $d_{i-1}'+1$ until index $d_i'-1$; we put the XOR of the bit in position $d_i$ of the first copy with the bit in position $d_i'$ of the second copy; and we repeat the bit in position $d_i'$ of the second copy. Finally, we repeat the remaining positions (if any) of the first copy, and then the remaining positions (if any) of the second copy. The situation is schematically represented in Figure \ref{fig:nondegraded}.  Note that the set $\Dc$ contains the positions that are \emph{good} (i.e., they can be decoded in a successive fashion) for $W_2$, but not for $W_1$. Similarly, the set $\Dc'$ contains the positions that are \emph{good} for $W_1$, but not for $W_2$. By combining the positions of $\Dc$ with the positions of $\Dc'$ we obtain $|2\Dc|$ positions such that one half of them is good for both $W_1$ and $W_2$ and the other half is good for neither $W_1$ nor $W_2$.

\begin{figure}[t] 
\centering 
\includegraphics[width=1.1\columnwidth, height = .8\columnwidth]{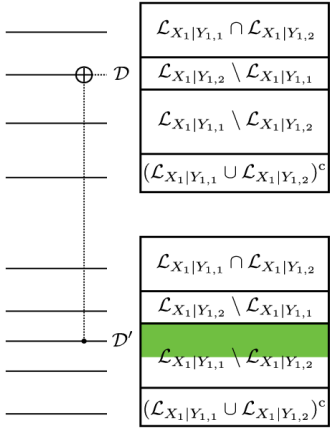}
\vspace{-1.1cm}

\caption{A graphical representation of one further step of polarization required when the family of channels is not ordered by degradation. By performing sufficiently many such polarization steps, we construct capacity-achieving rate-compatible polar codes for any family of BMS channels. } 
\label{fig:nondegraded}
\vspace{-.5cm}
\end{figure} 

Consider the vector of length $2n_1$ obtained with the procedure above. Clearly, the fraction of positions that are good for $W_1$ is the same and it tends to $I(W_1)$. Similarly, the fraction of positions that are good for $W_2$ is the same and it tends to $I(W_2)$. Furthermore, the fraction of positions that are good for $W_2$ but not for $W_1$ is halved. By repeating this procedure for $t$ times, we obtain a code of block length $n_1\cdot 2^t$ such that the fraction of positions that are good for $W_2$ but not for $W_1$ is at most $2^{-t} |\Dc|/n_1< 2^{-t}$. Consequently, the fraction of positions that are good both for $W_1$ and $W_2$ is at least $I(W_2)-2^{-t}$. By choosing $t$ sufficiently large, we can find a subset $\Ac_{2}^{(1)}$ of those positions such that $|\Ac_{2}^{(1)}| = 2^t n_1 R_2$. Hence, there exists a subset $\Ac_{1}^{(1)}$ of positions that are good for $W_1$ such that $|\Ac_{1}^{(1)}| = 2^t n_1 R_1$ and $\Ac_1^{(1)} \supseteq \Ac_{2}^{(1)}$, as required.

Let us now go back to our original problem of satisfying \eqref{eq:Acdegraded} for any $\ell \in [K]$, for $j\ge \ell$ and for $j'\ge j$. Consider the $\ell$-th transmission for $\ell\in [K]$. Note that, in order to satisfy \eqref{eq:Acdegraded} for $j\ge \ell$ and $j'\ge j$, it suffices to have that
\begin{equation}\label{eq:newcond}
\Ac^{(\ell)}_{\ell}\supseteq \Ac^{(\ell)}_{\ell+1}\supseteq \ldots \supseteq \Ac^{(\ell)}_{K}.
\end{equation}
To fulfill this last requirement, we proceed as follows first for $j=\ell$, then $j=\ell+1$, and so on until $j=K-1$. Define $\Lc_{X_{\ell} \mid Y_{\ell, j}}$ and $\Lc_{X_{\ell} \mid Y_{\ell, j+1}}$ as in \eqref{eq:lhsetsrc}. If $\Lc_{X_{\ell} \mid Y_{\ell, j}} \supseteq \Lc_{X_{\ell} \mid Y_{\ell, j+1}}$, then $\Ac^{(\ell)}_{j}\supseteq \Ac^{(\ell)}_{j+1}$ is satisfied automatically and no further action is needed. Otherwise, we perform $t$ further steps of polarization as described in the previous paragraph. In this way, we increase the block length of the code by a factor of $2^t$, which means that the cardinalities of both $\Ac^{(\ell)}_{j}$ and $\Ac^{(\ell)}_{j+1}$ increase by a factor $2^t$. In addition, the fraction of positions that are good both for $W_{j}$ and $W_{j+1}$ is at least $I(W_{j+1})-2^{-t}$. Therefore, by choosing $t$ sufficiently large, we can ensure that $\Ac^{(\ell)}_{j} \supseteq \Ac^{(\ell)}_{j+1}$, as requested.

By adding these additional polarization steps, condition \eqref{eq:newcond} is satisfied. Consequently, by repeating the same argument of \cite{HongHuiMaric2015}, we obtain capacity-achieving rate-compatible polar codes for any family of channels and Theorem \ref{th:ratecompgen} is  proved.

\noindent {\bf Complexity and Performance Analysis.} Consider the $\ell$-th transmission, for $\ell\in [K]$. The \emph{code construction} is performed in the same way as in the degraded case analyzed in Section \ref{ratecompdegr}: on the one hand, if $n_{\ell}$ is a power of 2, nothing changes with respect to the standard transmission over a BMS channel; on the other hand, if $n_{\ell}$ is not a power of 2, we construct punctured polar codes. As a result, the complexity is 
$\Theta(n_{\ell})$.

Let us now consider the \emph{encoding and decoding complexity}. Recall that we require $t$ further polarization steps to ensure that $\Ac^{(\ell)}_{j} \supseteq \Ac^{(\ell)}_{j+1}$, for a fixed $j\in \{\ell, \ell+1, \ldots, K-1\}$. Hence, the total number of further polarization steps is at most $(K-1)t$, which yields an additional encoding and decoding complexity of $\Theta((K-1)t \cdot n_\ell)$. Note that $t$ and $K$ are fixed, while $n_{\ell}$ grows large. As a result, the overall encoding and decoding complexity remains of $\Theta(n_\ell \log_2 n_\ell)$, as in the standard polar coding construction.

Finally, let us discuss the \emph{error performance} for the $\ell$-th transmission, $\ell\in [K]$. The removal of the degradation assumption comes at a cost: we need to increase the block length of the code and we suffer a rate loss. The block length $n_{\ell}$ increases by a factor of at most $2^{(K-1)t}$, as the total number of further polarization steps is at most $(K-1)t$. The rate loss represents the gap between $R_{\ell}$ and $I(W_{\ell})$. For every $j$ such that we need to ensure $\Ac^{(j)}_{\ell-1}\supseteq \Ac^{(j)}_{\ell}$, we suffer a rate loss of at most $2^{-t}$. Hence, the total rate loss is at most $(K-1)2^{-t}$. This upper bound does not depend on the family of channels $W_1, W_2, \ldots, W_K$. Hence, given a requirement on the gap to capacity, the value of $t$ can be chosen prior to code design. Furthermore, let us point out that there is a trade-off between the rate loss and the increase in the block length: larger values of $t$ make the rate loss smaller and the block length bigger. 

The increase in the block length and the rate loss does not depend on the block length $n_{\ell}$, but only on $K$ and $t$. Furthermore, as $n_{\ell}$ grows large, the parameters $K$ and $t$ are constants, which implies that the scaling of $2^{-\sqrt{n_{\ell}}}$ and of $2^{-\sqrt{n_{\ell}\cdot 2^{-(K-1)t}}}$ is roughly the same. Consequently, the error probability scales with the block length roughly as $2^{-\sqrt{n_{\ell}}}$, which gives the same characterization as standard polar codes in terms of the \emph{error exponent} regime. For the same reason, the error probability scales with the Bhattacharyya parameter $Z(W_\ell)$ of the channel as $Z(W_\ell)^{\sqrt{n_{\ell}}}$, which means that the proposed scheme is \emph{not affected by error floors}. However, the \emph{scaling exponent is affected} by the removal of the degradation assumption. This fact is clarified by the next theorem whose proof immediately follows.

\begin{MyTheorem}\label{th:scalexp}
Consider the transmission of a sequence of rate-compatible polar codes over the family of channels $W_1, W_2, \ldots, W_K$. Let $\mu$ be the scaling exponent for the transmission of a polar code over $W_{\ell}$, for $\ell\in [K]$. Then, the scaling exponent associated to the $\ell$-th transmission of the rate-compatible family of polar codes is upper bounded by $\mu +K-1$. 
\end{MyTheorem}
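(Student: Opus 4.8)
The plan is to track how the two penalties introduced by the construction---the multiplicative blow-up of the block length and the additive rate loss---combine into a single gap-to-capacity bound, and then to optimize the free parameter $t$ as a function of the block length. Recall that the scaling exponent $\mu$ of a polar code over $W_{\ell}$ is defined through the relation (gap to capacity) $\sim N^{-1/\mu}$, or equivalently through the fact that a block length of order $\varepsilon^{-\mu}$ suffices to come within $\varepsilon$ of $I(W_{\ell})$. The crucial observation, which departs from the design discussion where $t$ was held fixed, is that for the scaling-exponent analysis $t$ must be allowed to grow with the block length, so that the rate loss vanishes.

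First I would decompose the total gap to capacity $I(W_{\ell})-R_{\ell}$ for the $\ell$-th transmission into two contributions. The additional polarization steps do not alter the fraction of indices that are good for $W_{\ell}$ itself, so the genuine polarization gap is governed by the \emph{base} block length, which I call $n_{\ell}$, and by the ordinary scaling exponent $\mu$; it is therefore at most $c\, n_{\ell}^{-1/\mu}$ for some constant $c$. On top of this, the chaining needed to align the index sets costs a rate loss of at most $(K-1)2^{-t}$, as established in the performance analysis above. Writing $N$ for the actual block length of the $\ell$-th transmission, the block-length blow-up gives $N \le n_{\ell}\, 2^{(K-1)t}$, hence $n_{\ell} \ge N\, 2^{-(K-1)t}$, and combining the two contributions yields
\begin{equation}
I(W_{\ell})-R_{\ell} \;\le\; c\, N^{-1/\mu}\, 2^{(K-1)t/\mu} \;+\; (K-1)\,2^{-t}.
\end{equation}

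Next I would optimize the right-hand side over $t$. Balancing the two terms, i.e.\ setting $c\, N^{-1/\mu} 2^{(K-1)t/\mu} = (K-1)2^{-t}$, gives $2^{t} \asymp N^{1/(\mu+K-1)}$ (up to a multiplicative constant, and rounding $t$ to the nearest integer, which only affects constants). Substituting this choice back, a short computation shows that the exponent of $N$ in the first term is
\begin{equation}
-\frac{1}{\mu}+\frac{K-1}{\mu(\mu+K-1)} \;=\; -\frac{1}{\mu+K-1},
\end{equation}
and the second term scales identically. Hence both contributions are of order $N^{-1/(\mu+K-1)}$, so the gap to capacity of the $\ell$-th transmission decays as $N^{-1/(\mu+K-1)}$, which is exactly the statement that the scaling exponent is upper bounded by $\mu+K-1$.

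I expect the only real subtlety to be the correct bookkeeping in the decomposition of the gap: one must argue that the degradation-removal steps leave the polarization of $W_{\ell}$ untouched---so that the polarization gap is controlled by $n_{\ell}$ and $\mu$, not by the inflated length $N$---while simultaneously charging the length inflation $2^{(K-1)t}$ and the rate loss $(K-1)2^{-t}$ to the final bound. Once this accounting is set up correctly, the remainder is the routine single-variable optimization over $t$ sketched above; the integrality of $t$ and the ``at most'' in the number of polarization steps affect only multiplicative constants, and hence not the scaling exponent.
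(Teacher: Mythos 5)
Your proposal is correct and follows essentially the same route as the paper: both arguments charge the construction with a block-length blow-up of $2^{(K-1)t}$ and an additive rate loss of $(K-1)2^{-t}$, and then balance these two penalties by tying $t$ to the gap (the paper sets $(K-1)2^{-t}=\delta$ and bounds the block length by a constant times $\delta^{-(\mu+K-1)}$; you fix $N$ and optimize $t$, which is the same balancing in the inverse parameterization). The bookkeeping point you flag---that the polarization gap is governed by the base length $n_{\ell}$ while the alignment steps only add the $(K-1)2^{-t}$ rate loss---is exactly the accounting the paper uses.
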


\begin{proof}
By definition of scaling exponent, the block length of a polar code with gap to capacity $\delta$ is upper bounded by $c/\delta^\mu$, for some constant $c$. Recall that the block length associated to the $\ell$-th transmission of the rate-compatible family of polar codes increases by a multiplicative factor of at most $2^{(K-1)t}$. Furthermore, the gap to capacity increases by an additive factor of at most $(K-1)2^{-t}$. Set $(K-1)2^{-t} = \delta$. Then, the total gap to capacity is upper bounded by $2\delta$ and the total block length is upper bounded by 
\begin{equation*}
\frac{c}{\delta^\mu} \cdot 2^{(K-1)t} = \frac{c}{\delta^\mu} \cdot \frac{(K-1)^{K-1}}{\delta^{K-1}} = \frac{2^{K+\mu-1}c(K-1)^{K-1}}{(2\delta)^{K+\mu-1}}, 
\end{equation*}
which proves the claim, as $2^{K+\mu-1}c(K-1)^{K-1}$ is a constant. 
\end{proof}
\vspace{-.3cm}

%%%%%%%%%%%%%%%%%%%%%%%%%%%%
\section{Conclusions} \label{sec:concl}
%%%%%%%%%%%%%%%%%%%%%%%%%%%%

We construct rate-compatible polar codes that achieve the capacity of \emph{any family of BMS channels}. To do so, we follow the lead of \cite{HongHuiMaric2015, LiTse2015}, where capacity-achieving rate-compatible polar codes are proposed for a \emph{degraded family of channels}. In order to remove the degradation condition, we take inspiration from the techniques to devise universal polar codes \cite{HRunipol, SaWa16}. The idea is to perform further polarization steps so that the sets of polar indices can be aligned as in the case of a degraded family of channels.

The proposed scheme requires that the transmitter knows over which family of channels the transmission is taking place. Our solution has the usual attractive features of polar codes as for the complexity and the error performance. A disadvantage is that the additional polarization steps yield a larger scaling exponent. Another drawback, from the implementation perspective, is that the backward decoding requires all the polar blocks to be stored until the last retransmission is received, thereby increasing the amount of buffering that is needed. The improvement of the scaling exponent and the reduction of the buffering size are left as open problems. 

We limit our study to channels that have a binary input alphabet and are symmetric. However, there seems to be no essential difficulty in extending our results to channels that have an arbitrary input alphabet and are asymmetric, by exploiting the ideas in \cite{STA09, MHU14asymm-ieeeit}.

\section*{Acknowledgment}
The work of M. Mondelli is supported by grant No. 200021\_166106 of the Swiss National Science Foundation and by the Dan David Foundation.

%%%%%%%%%%%%%%%%%%%%%%%%%%%%%%%%%%%%%%%%%%%
%%%%%%%%%%%%%%%%%%%%%%%%%%%%%%%%%%%%%%%%%%%
\bibliographystyle{IEEEtran}
\bibliography{lth,lthpub}
%%%%%%%%%%%%%%%%%%%%%%%

%%%%%%%%%%%%%%%%%%%%%%%%%%%%%%%%%%%%%%%%%%%%
\end{document}